\documentclass[11pt]{article}

\usepackage[margin=1.25in]{geometry}
\usepackage{amsmath,amssymb,amsthm}
\usepackage{booktabs}
\usepackage{array}
\newcolumntype{P}[1]{>{\raggedright\arraybackslash}p{#1}}
\usepackage{microtype}
\usepackage{setspace}
\usepackage{enumitem}
\usepackage{hyperref}
\usepackage{xcolor}
\usepackage{graphicx}

\hypersetup{colorlinks=true,linkcolor=black,citecolor=black,urlcolor=blue}
\newtheorem{proposition}{Proposition}

\newtheorem{lemma}{Lemma}
\theoremstyle{definition}

\theoremstyle{remark}

\newcommand{\ind}{\mathbf{1}}

\title{\textbf{Organizational Lifespan as Commitment:\\ The Case of Foundations}}
\author{Christian Jaag\thanks{Swiss Economics, Ottikerstrasse 7, 8006 Z\"urich, Switzerland and Somaha Foundation, Weinbergstrasse 102, 8006 Z\"urich, Switzerland. Corresponding author: \texttt{christian.jaag@somaha-stiftung.ch}.}}
\date{October 2026}

\begin{document}
\maketitle

\begin{abstract}
An organization may bind future decision makers through a commitment over its own lifespan. This paper studies that choice for philanthropic foundations. A foundation may commit to terminate, commit to institutional continuity, or leave termination to future discretion. Discretion preserves state-contingent stopping. Perpetuity instead raises investment in foundation-specific organizational capital whose return is partly lost upon termination. A terminal commitment limits excessive continuation when future governing bodies obtain private benefits from organizational survival. A canonical model yields nonempty regions in which each regime is optimal, while a finite-horizon extension endogenizes the terminal date. Perpetuity is optimal when organizational capital is sufficiently productive relative to the cost of continuing in bad states; discretion is optimal when stopping flexibility is valuable and governance is aligned; a finite lifespan is optimal when survival incentives create sufficiently costly entrenchment. Organizational lifespan is therefore a distinct commitment margin.
\end{abstract}

\noindent\textbf{Keywords:} organizational lifespan, commitment, foundations, perpetuity, spend-down, organizational capital, governance.\\
\textbf{JEL:} D23, D61, D64, L31.

\section{Introduction}

Organizations routinely constrain future decision makers. They allocate control rights, write long-term contracts, protect particular objectives, and restrict the use of assets. A more fundamental commitment receives less attention: an organization may also restrict future discretion over whether the organization itself continues to exist. Committing to survive protects investments whose returns depend on institutional continuity. Committing to terminate constrains future decision makers who may value organizational survival too highly. Leaving lifespan open preserves the option to respond to information that arrives later. This paper studies the trade-off among these three forms of institutional design.

Philanthropic foundations provide a particularly clear setting. A foundation may be established to spend down by a specified date, to exist in perpetuity, or without a predetermined terminal date. These arrangements govern the lifespan of the organization rather than merely its rate of spending. They matter because a foundation is both a pool of assets and an organization. It may accumulate expertise, screening capacity, reputation, networks, and institutional memory whose value depends on continued operation. It may also develop a bureaucracy whose governing bodies obtain benefits from preserving the institution after its social value has declined.

The analysis compares three regimes. A spend-down foundation $S(T)$ commits to terminate at a finite date $T$. A perpetual foundation $P$ commits not to terminate voluntarily while solvent. An open-ended foundation $N$ leaves the stopping decision to future governing bodies. The distinction is a commitment choice over future control of the institution's existence: $P$ removes discretion to terminate, $S(T)$ removes discretion to continue beyond $T$, and $N$ preserves both options.

The starting point is a flexibility benchmark. If the lifespan rule does not affect current behavior or future governance, $N$ weakly dominates either commitment. Future decision makers observe information unavailable at formation and may implement the same continuation path that would have been imposed ex ante. A restriction on future stopping therefore requires an incentive or governance benefit to compensate for the lost option value.

The first benefit arises from investment in organization-specific capital. Before future conditions are known, the foundation invests in capabilities that are productive only while the institution remains active. Under $N$, later decision makers choose whether to stop after the investment is sunk. Their ex-post stopping rule takes the installed organizational capital as given and therefore does not internalize how the prospect of termination weakened the earlier investment incentive. Perpetuity removes this ex-ante commitment problem and raises investment. It may therefore be valuable even though stopping discretion is desirable once the state is known. In the canonical model, the welfare gain from $P$ rises with the productivity of organizational capital, whereas its cost is continuation in states in which the institution has negative direct social value. The model yields a closed-form threshold at which the investment effect dominates the stopping loss.

The second benefit works in the opposite direction. Future governing bodies may obtain private benefits from organizational survival through employment, prestige, control, or attachment to the institution. Sequential discretion then creates an entrenchment problem: the institution may continue after its social continuation value has become negative. A terminal commitment limits that distortion. A finite-horizon extension shows that the optimal terminal date need not be immediate. It balances the direct social value of another period and the additional return to organizational investment against the governance cost of allowing later continuation.

These two commitment problems generate a three-way institutional choice. There are nonempty parameter regions in which each lifespan regime is optimal. Perpetuity is preferred when durable organization-specific capital is sufficiently productive. Open-ended discretion is preferred when the option value of state-contingent stopping is large and governance is aligned. A finite lifespan is preferred when future survival incentives generate sufficiently costly overcontinuation. The result gives the three institutional forms symmetric interpretations: commitment to continuity protects organizational investment, commitment to termination disciplines future governance, and discretion preserves adaptation.

The mechanism is distinct from familiar financing margins. Institutional lifespan differs from payout because a long-lived foundation may spend principal aggressively, while a finite-life foundation may initially retain substantial assets. It differs from grant duration because an enforceable multiyear contract may bind support to a recipient without binding the foundation to exist indefinitely. It also differs from real-corpus preservation, which is an additional constraint on the asset path rather than a necessary implication of perpetuity. Separating these margins isolates the organizational commitment problem.

The paper contributes to organizational economics by adding institutional survival to the set of choices through which current principals shape future discretion. Standard incomplete-contracting arguments emphasize control rights over future actions and assets conditional on the organization remaining in place. Here existence itself is the committed decision margin: the lifespan rule determines whether future controllers retain an organization within which those rights may be exercised. Foundations make this margin unusually observable because their charters may explicitly choose perpetuity or a terminal date. The same trade-off arises in other mission-oriented organizations with durable mandates, including research institutes, endowed cultural institutions, and nonprofit operating organizations, whenever organizational capital is partly nontransferable and later controllers may value survival differently from the original principal.

The paper also contributes to the economics of foundations. Existing work studies payout, endowment management, multiyear grants, and limited-life foundations. The present analysis makes the foundation's own lifespan the choice variable and formally includes the open-ended regime that lies between predetermined termination and perpetuity. The central comparison therefore concerns investment incentives, governance, and stopping discretion rather than alternative spending rates.

Section 2 positions the analysis relative to organizational commitment and the foundation literature. Section 3 defines the institutional choice and the flexibility benchmark. Sections 4 and 5 study commitment to continuity and commitment to termination. Section 6 synthesizes the optimal lifespan. Section 7 considers information, grant relationships, and financing constraints. Section 8 discusses institutional design, and Section 9 concludes. Proofs are collected in the Appendix.

\section{Related literature}

The paper first relates to organizational economics and incomplete contracting. Durable organizational arrangements allocate authority over future actions when relevant contingencies cannot be fully specified ex ante (Grossman and Hart, 1986; Hart and Moore, 1990). Hansmann and Thomsen (2021) show, in the setting of foundation-owned firms, how the foundation form may provide long-run commitment. The present paper studies a different commitment margin. Rather than allocating control over an action conditional on organizational survival, the lifespan rule makes existence itself subject to ex-ante commitment. This distinction creates a specific dynamic problem: an ex-post efficient stopping rule may weaken ex-ante investment in nontransferable organizational capital, while unrestricted future control may also permit entrenchment. The model therefore treats continuation of the institution as an endogenous element of organizational design.

The paper is related second to the literature on foundation payout and endowment management. Deep and Frumkin (2006) describe the foundation payout decision as a choice between current and future grantmaking. Klausner (2003) explains why financial compounding alone does not resolve the intergenerational allocation problem, because the relevant comparison includes the social return to charitable spending. Afik and Katz (2019) show how payout and portfolio choices affect longevity under alternative capital-market environments. Hansmann (1990) studies the functions of long-lived endowments, while Brown et al. (2014) and Roth Tran (2019) examine smoothing and mission-related portfolio choices. Lo, Matveyev, and Zeume (2025, revised 2026) provide large-sample evidence on nonprofit endowments, including their relation to mission spending, revenue hedging, investment performance, and governance. The present paper separates this asset-allocation problem from the organizational choice of lifespan.

A related recent contribution is Matveyev and Zeume (2026), who study nonprofit liquidity and donor behavior. Their cash-rich paradox qualifies a mechanical insurance interpretation of retained assets because reported reserves may affect external support. This channel matters for the financial value of continuing assets but is separate from the organizational-capital and governance mechanisms in the main model.

A third literature studies limited-life foundations directly. Thelin and Trollinger (2009) document historical and contemporary spend-down cases. Loh and Buteau (2017) show that limited-life foundations make distinct choices about strategy, staffing, grantmaking, evaluation, and grantee relationships. Hengevoss and von Schnurbein (2025) identify strategic patterns among sunset foundations. Recent practice has also questioned the binary framing of perpetuity versus spend-down. Mananzala and Rojas (2026) describe a ``tidal'' approach in which spending and corpus evolve with mission conditions rather than a fixed terminal date. These contributions motivate treating open-ended discretion as a separate institutional regime. The model asks when either direction of lifespan commitment dominates that benchmark.

The analysis also relates to multiyear grantmaking. Afik, Levy, and Katz (2018) study payout jointly with multiyear grants and show that grant duration is a separate margin of foundation policy. This distinction is important here: when enforceable grants span every payoff-relevant date, institutional perpetuity adds no grantee-commitment value. The main model therefore does not rely on recipient investment. It instead studies investment by the foundation in its own organization-specific capital.

Recent work has renewed the normative debate over perpetuity. Ochoa-Carreno (2026) studies perpetuity and spend-down in relation to investment and technological transitions. Ghadamian (2026) argues that perpetual endowments may misallocate capital when real-corpus preservation constrains mission spending. The present paper separates institutional continuity from corpus preservation. A foundation may commit to remain active without requiring the real value of its initial endowment to remain constant.

Related analyses study adjacent commitment and financing margins. Jaag (2026a) studies grants versus repayable impact finance at a fixed gross commitment. Jaag (2026b) compares private, philanthropic, and state capital in staged financing, where provider identity determines later continuation and recipient relationship-specific investment. The foundation form is fixed here, and the object of commitment is the foundation's lifespan; recipient investment appears only in an extension. Jaag (2026c) studies mission commitments chosen before future intervention contracts and focuses on organizational rights and responsiveness to later contracting. The analysis here takes the mission as given and studies discretion over institutional survival. Financing instruments, provider assignment, mission design, and lifespan design are distinct margins.

\section{Institutional choice}

\subsection{Timing and lifespan regimes}

The foundation chooses its lifespan rule before making irreversible investments and before future conditions are known. The sequence is: at date 0 the founder selects a lifespan regime; the foundation then invests in organization-specific capital; future social conditions are realized; and, when the regime leaves discretion, future governing bodies decide whether the organization continues. This timing isolates the commitment problem. A later stopping decision treats earlier investment as sunk, whereas the founder anticipates how the stopping rule changes the return to that investment.

The analysis compares three regimes. A perpetual foundation $P$ commits against voluntary termination while continued operation remains feasible. A spend-down foundation $S(T)$ commits to terminate after a finite number $T\geq1$ of future operating periods. An open-ended foundation $N$ specifies no terminal date and leaves future continuation to its governing bodies. Perpetuity concerns institutional continuity rather than preservation of the initial corpus; spend-down concerns a terminal date rather than a particular payout path.

Let $C^{soc}$ denote the social continuation value of an active organization at a future decision date, excluding private benefits of its controllers. Under aligned governance, a discretionary board continues exactly when $C^{soc}\geq0$. Future controllers may instead obtain a private survival benefit $d\geq0$ from employment, prestige, control, or attachment to the institution. Their continuation rule then becomes
\begin{equation}
C^{soc}+d\geq0.
\label{eq:generalboardrule}
\end{equation}
The parameter $d$ is a reduced-form governance wedge: it affects future continuation but is not part of social welfare.

\subsection{The flexibility benchmark}

\begin{lemma}[Flexibility benchmark]\label{lem:flexibility}
If governance is aligned and the lifespan rule does not affect earlier investment or other behavior, open-ended discretion weakly dominates either a fixed terminal date or perpetuity. The dominance is strict whenever the committed regime requires a stop-versus-continue action that differs from the socially optimal action with positive probability.
\end{lemma}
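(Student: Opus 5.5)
The argument is a revealed-preference (replication) argument carried out date by date via backward induction. Because the lifespan rule does not affect earlier investment or other behavior, the stock of organization-specific capital, the asset path, and the distribution of future social conditions are the same under $N$, $P$, and $S(T)$. The regimes therefore differ only in the stop-versus-continue actions available at future decision dates. I will formalize a lifespan regime as a restriction on the set of feasible continuation policies. A policy is a map from histories to $\{\text{stop},\text{continue}\}$, measurable with respect to the information available at each decision date. Under $P$ the only feasible action while solvent is to continue. Under $S(T)$ the board must stop after $T$ periods, and discretion before $T$ is either retained or removed; either reading works below. Under $N$ every such policy is feasible.

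The key step is to show that, with $d=0$, the board under $N$ implements the socially optimal policy within the unrestricted set. Take the last decision date, or truncate at a long horizon and pass to the limit for the infinite-horizon case. There the aligned board continues iff $C^{soc}\geq 0$, by rule~\eqref{eq:generalboardrule} with $d=0$, and this pointwise maximizes social value given the history. At earlier dates, $C^{soc}$ is by definition the social value of continuing followed by the continuation policy that will in fact be chosen. By the inductive hypothesis that policy is optimal, so choosing to continue iff $C^{soc}\geq0$ is again pointwise optimal. Hence the welfare under $N$ equals the supremum of expected social welfare over all feasible policies. The policies induced by $P$ and $S(T)$ lie in this set, and the state and investment distributions are identical across regimes. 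Their expected welfare is therefore weakly lower, which gives weak dominance.

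For strictness, compare the committed regime's action with the optimal action date by date. At any history where they differ, the committed action yields strictly lower conditional social value. I will take the rule to ``continue iff $C^{soc}\geq0$'' with ties broken in favour of the committed action, so that a difference means $C^{soc}<0$ under forced continuation or $C^{soc}>0$ under forced stopping. Consider the first date at which the committed policy departs from the optimal one. On that event, which has positive probability by hypothesis, the conditional loss is strictly positive. Elsewhere it is weakly nonnegative by the optimality argument. Integrating gives a strict gap.

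The main obstacle is measure-theoretic rather than economic. ``Differs with positive probability'' must be read so that the difference occurs on histories where $C^{soc}\neq0$. Otherwise ties could make the two actions differ without any welfare loss. I would handle this by stating the tie-breaking convention explicitly, or by interpreting ``socially optimal action'' as the strictly optimal one. I would also need to check that the first-departure event inherits positive probability from the overall difference event. It does, because any history with a difference has a first departure date, and there are countably many dates.
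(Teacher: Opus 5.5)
Your proposal is correct and uses the same replication argument as the paper: the fixed lifespan rules only restrict the state-contingent stop-versus-continue action set available under $N$, so an aligned discretionary board does weakly better, and strictly better when the committed action departs from the optimal one with positive probability. The paper's three-line proof takes several steps for granted, and you make them explicit: the backward-induction step showing that sequential aligned boards actually attain the optimum, the first-departure argument for strictness, and the tie-breaking caveat.
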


The lemma identifies what commitment must accomplish. Without an incentive or governance effect, $N$ retains every stop-versus-continue action available under a fixed rule and adds the ability to condition that action on later information. Perpetuity and spend-down therefore require distinct commitment benefits to compensate for lost flexibility.

\section{Commitment to continuity}

Consider a two-state specialization. At date 0 the foundation chooses organization-specific investment $e\geq0$ at cost
\begin{equation}
\frac{c}{2}e^2,\qquad c>0.
\label{eq:canoncost}
\end{equation}
If the institution operates in the future period, the investment adds $\eta e$ to social value, where $\eta\geq0$ measures the productivity and institution specificity of organizational capital. If the institution terminates, that payoff is lost. The future direct social continuation value is
\begin{equation}
z\in\{z_H,z_L\},\qquad z_H>0>z_L,
\end{equation}
with $\Pr(z=z_H)=p\in(0,1)$. The lifespan rule is chosen before $e$, and the state is observed only after $e$ is sunk.

First suppose governance is aligned, $d=0$, and open-ended discretion continues in the high state and stops in the low state. Its date-0 welfare is
\begin{equation}
W_N(e)=p(z_H+\eta e)-\frac{c}{2}e^2,
\label{eq:WNcanonical}
\end{equation}
so
\begin{equation}
e_N^*=\frac{p\eta}{c},
\qquad
W_N^*=pz_H+\frac{p^2\eta^2}{2c}.
\label{eq:WNsolution}
\end{equation}
The low-state stopping condition is
\begin{equation}
z_L+\frac{p\eta^2}{c}<0.
\label{eq:lowstopcondition}
\end{equation}

Perpetuity operates in both states. Hence
\begin{equation}
W_P(e)=p z_H+(1-p)z_L+\eta e-\frac{c}{2}e^2,
\label{eq:WPcanonical}
\end{equation}
with
\begin{equation}
e_P^*=\frac{\eta}{c},
\qquad
W_P^*=p z_H+(1-p)z_L+\frac{\eta^2}{2c}.
\label{eq:WPsolution}
\end{equation}
Thus $e_P^*>e_N^*$ whenever $p<1$ and $\eta>0$. Optimized welfare differs by
\begin{equation}
W_P^*-W_N^*=(1-p)\left[z_L+\frac{(1+p)\eta^2}{2c}\right].
\label{eq:perpdiscgap}
\end{equation}
Define
\begin{equation}
\eta_P\equiv\sqrt{\frac{-2c z_L}{1+p}}.
\label{eq:etathreshold}
\end{equation}

\begin{proposition}[Continuity commitment]\label{prop:continuity}
Under aligned governance and condition (\ref{eq:lowstopcondition}), open-ended discretion dominates perpetuity when $\eta<\eta_P$, while perpetuity dominates open-ended discretion when $\eta>\eta_P$.
\end{proposition}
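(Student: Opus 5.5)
The plan is to compare the two optimized welfare levels directly, using the closed forms already derived, and read the sign of the gap off the threshold $\eta_P$. First I will justify that, under aligned governance and condition (\ref{eq:lowstopcondition}), the welfare of regime $N$ really is given by (\ref{eq:WNcanonical}) with maximizer $e_N^*=p\eta/c$. The board continues in the high state because $z_H+\eta e\geq z_H>0$. At the candidate investment $e_N^*$, the low-state continuation value is $z_L+\eta e_N^*=z_L+p\eta^2/c$, which is negative by (\ref{eq:lowstopcondition}), so the board stops in the low state.

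The step I expect to need the most care is showing that this equilibrium is consistent at the investment stage. The founder anticipates the board's stopping rule, and welfare under $N$ is evaluated with the continuation rule induced at the chosen $e$. So I will treat $N$ as the regime in which the conjectured rule "continue iff $z=z_H$" is self-confirming at $e_N^*$. Condition (\ref{eq:lowstopcondition}) is exactly the assumption that delivers this. If one worried about the founder choosing a larger $e$ to induce low-state continuation, that path would replicate perpetuity's payoff, and the comparison below would then only strengthen $N$ weakly. I would simply note that the proposition concerns the stated stopping equilibrium.

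For perpetuity, (\ref{eq:WPcanonical}) is a strictly concave quadratic in $e$, which gives (\ref{eq:WPsolution}). I will then verify (\ref{eq:perpdiscgap}) by subtraction:
\[
W_P^*-W_N^*=(1-p)z_L+\frac{(1-p^2)\eta^2}{2c}=(1-p)\left[z_L+\frac{(1+p)\eta^2}{2c}\right].
\]
Since $1-p>0$, the sign of the gap equals the sign of $z_L+(1+p)\eta^2/(2c)$. This bracket is strictly increasing in $\eta\geq0$ and vanishes exactly at $\eta=\eta_P$ as defined in (\ref{eq:etathreshold}); the threshold is well defined because $z_L<0$. Hence the gap is negative for $\eta<\eta_P$, so $N$ dominates, and positive for $\eta>\eta_P$, so $P$ dominates.

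Finally, I will remark that the two cases are compatible with (\ref{eq:lowstopcondition}). That condition requires $\eta^2<-cz_L/p$, while $\eta_P^2=-2cz_L/(1+p)$, and $2/(1+p)<1/p$ for $p<1$. Therefore values of $\eta$ on both sides of $\eta_P$ satisfy the maintained condition, and neither case of the proposition is vacuous.
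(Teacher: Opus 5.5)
Your proof is correct and follows the paper's own argument: maximize the two concave quadratics, subtract to obtain (\ref{eq:perpdiscgap}), and sign the bracket at $\eta_P$. You also add the low-state consistency check and the nonvacuity check via $2/(1+p)<1/p$, neither of which the paper's proof spells out.

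One caution on your side remark. If the founder could raise $e$ to induce low-state continuation, $N$ would attain exactly $\max\{W_N^*,W_P^*\}$, because for $\eta>\eta_P$ one has $\eta^2>\eta_P^2>-cz_L$, so $e_P^*=\eta/c$ already makes low-state continuation efficient. Then $P$ would only tie $N$ above $\eta_P$. Restricting to the stated stopping equilibrium is therefore essential for the second half of the proposition, not a harmless weak strengthening of $N$.
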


The advantage of perpetuity is an ex-ante commitment effect. Under $N$, the future board decides after $e$ is sunk and therefore does not internalize how the prospect of later termination reduced the earlier return to organization-specific investment. Perpetuity removes that stopping option and raises investment. Its benefit must exceed the social loss from operating in the low state. The result therefore does not follow mechanically from a longer horizon.

\section{Commitment to termination}

Now consider future operating periods $t=1,2,\ldots$. A terminal commitment $S(T)$ permits operation through period $T$ and requires termination thereafter. Suppose a sufficiently strong survival bias would otherwise keep the institution active whenever continued operation remains feasible. Let $b_t$ denote the date-0 social value of operating in period $t$ before the return to date-0 organization-specific investment. The investment contributes $\eta q^{t-1}e$ in period $t$, where $q\in(0,1)$ measures persistence. Social welfare under $S(T)$ is
\begin{equation}
W_{S(T)}(e)=\sum_{t=1}^{T}\beta^{t-1}\left(b_t+\eta q^{t-1}e\right)-\frac{c}{2}e^2.
\label{eq:WST}
\end{equation}
Define
\begin{equation}
B_T\equiv\sum_{t=1}^{T}(\beta q)^{t-1}=\frac{1-(\beta q)^T}{1-\beta q}.
\label{eq:BT}
\end{equation}
Then
\begin{equation}
e_T^*=\frac{\eta B_T}{c},
\qquad
W_{S(T)}^*=\sum_{t=1}^{T}\beta^{t-1}b_t+\frac{\eta^2 B_T^2}{2c}.
\label{eq:WSTstar}
\end{equation}
The marginal value of extending the terminal date from $T$ to $T+1$ is
\begin{equation}
\Delta_T=\beta^T b_{T+1}+\frac{\eta^2}{2c}\left[2B_T(\beta q)^T+(\beta q)^{2T}\right].
\label{eq:deltaT}
\end{equation}
The first term is the direct social value of another operating period. The second is the investment effect of a longer credible horizon.

\begin{proposition}[Optimal finite terminal date]\label{prop:optimalT}
Suppose $b_1>0$, the sequence $\{b_t\}$ is eventually weakly decreasing with $\limsup_{t\to\infty}b_t<0$, and $q\in(0,1)$. Then $\Delta_T<0$ for all sufficiently large $T$. If $\Delta_T$ single-crosses from positive to negative, there is a unique finite optimal terminal date $T^*\geq1$ characterized by
\begin{equation}
\Delta_{T^*-1}\geq0,
\qquad
\Delta_{T^*}<0.
\label{eq:Tstarcondition}
\end{equation}
If survival bias makes open-ended discretion continue indefinitely, $S(T^*)$ strictly dominates biased discretion and perpetuity whenever the single crossing is strict.
\end{proposition}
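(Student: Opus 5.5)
The plan has four steps: bound $\Delta_T$ for large $T$, use single crossing to locate $T^*$, compare with perpetuity, and compare with biased discretion. Throughout I assume $\beta\in(0,1)$, so that $\beta q<1$ and $B_T\uparrow B_\infty\equiv 1/(1-\beta q)$.

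\textbf{Step 1: $\Delta_T<0$ for large $T$.} I would factor $\beta^T$ out of (\ref{eq:deltaT}):
\[
\Delta_T=\beta^T\Big[b_{T+1}+\frac{\eta^2}{2c}\big(2B_Tq^T+q^T(\beta q)^T\big)\Big].
\]
Because $\limsup b_t<0$, there are $\varepsilon>0$ and $T_0$ with $b_{T+1}\le -\varepsilon$ for all $T\ge T_0$. The bracketed investment term is at most $\frac{\eta^2}{2c}q^T(2B_\infty+1)$, which tends to $0$ since $q\in(0,1)$. The bracket is therefore eventually below $-\varepsilon/2$, and $\Delta_T<0$ for all large $T$. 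This step needs only the $\limsup$ condition. Eventual monotonicity of $b_t$ is not needed here; it only supports the plausibility of single crossing.

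\textbf{Step 2: characterising $T^*$.} I would write $W_{S(T)}^*=W_{S(1)}^*+\sum_{s=1}^{T-1}\Delta_s$. Suppose $\Delta_T$ is nonnegative up to some index and negative afterwards. Then the sequence $W_{S(T)}^*$ is weakly increasing up to the first $T$ with $\Delta_T<0$ and strictly decreasing after it. The maximiser is that $T^*$, which is exactly condition (\ref{eq:Tstarcondition}); set $T^*=1$ if $\Delta_1<0$, with the convention $\Delta_0\ge0$. Finiteness of $T^*$ follows from Step 1. For uniqueness, ties can only arise from a zero value $\Delta_{T^*-1}=0$. I would handle this by adopting the stated inequality convention, which selects the largest maximiser, or by invoking strict single crossing.

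\textbf{Step 3: comparison with perpetuity.} Under perpetuity the welfare is $W_P^*=\lim_{T\to\infty}W_{S(T)}^*$. I would justify the limit as follows: perpetuity means operation in all periods, investment is $e=\eta B_\infty/c$, and the series $\sum\beta^{t-1}b_t$ converges, assuming $b_t$ is bounded. Then
\[
W_P^*=W_{S(T^*)}^*+\sum_{s\ge T^*}\Delta_s .
\]
Every term of this tail is negative, and $\Delta_{T^*}<0$ strictly, so $W_P^*<W_{S(T^*)}^*$. I expect this to be the main obstacle. It requires interchanging the limit with the quadratic investment term, which is where I would use $B_T\to B_\infty$ to get $\eta^2B_T^2/2c\to\eta^2B_\infty^2/2c$. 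It also requires convergence of the $b$-sum, which is why I need $b_t$ bounded (or at least $\beta$-summable).

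\textbf{Step 4: comparison with biased discretion.} Under the hypothesis of the statement, biased discretion continues indefinitely whenever operation is feasible. Its path of operation therefore coincides with perpetuity's. The one subtlety is investment: the founder under $N$ anticipates indefinite continuation, so $e$ and welfare are identical to those under $P$. Hence $W_N=W_P^*<W_{S(T^*)}^*$, which gives strict dominance over biased discretion as well.
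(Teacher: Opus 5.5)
Your proposal is correct and follows the paper's proof. You factor out $\beta^T$ and combine $\limsup b_t<0$ with $B_T\le(1-\beta q)^{-1}$ to get $\Delta_T<0$ eventually, locate $T^*$ by single crossing, and compare $S(T^*)$ with the indefinite path shared by perpetuity and biased discretion. Your telescoping bound $\sum_{s\ge T^*}\Delta_s\le\Delta_{T^*}<0$ makes explicit a comparison the paper only asserts.

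Two small refinements. First, the ``convention'' $\Delta_0\ge0$ is exactly what the hypothesis $b_1>0$ delivers, since $B_0=0$ gives $\Delta_0=b_1+\eta^2/(2c)>0$. Second, you need not assume $b_t$ bounded in Step 3: $W_{S(T)}^*$ is strictly decreasing for $T\ge T^*$, so its limit exists in $[-\infty,\infty)$, and the strict inequality holds even if that limit is $-\infty$.
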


A finite lifespan therefore need not mean rapid liquidation. Stronger or more persistent organization-specific capital raises the investment value of extending the horizon and tends to delay $T^*$. A social continuation profile that deteriorates more rapidly brings the terminal date forward. The relevant comparison is between the social value of continued institutional operation, its effect on earlier investment, and the governance distortion from leaving continuation to future controllers.

\section{The optimal lifespan}

The two preceding commitment problems yield the answer to the institutional choice.

\begin{proposition}[Three lifespan regimes]\label{prop:three}
There are nonempty parameter regions in which each lifespan class is optimal. With aligned governance, open-ended discretion is preferred when the value of state-contingent stopping exceeds the investment benefit of continuity commitment, while perpetuity is preferred when organization-specific capital is sufficiently productive. With sufficiently strong survival bias and an eventually negative social continuation profile, a finite terminal commitment $S(T^*)$ is preferred when Proposition \ref{prop:optimalT} applies and its strict single crossing makes the optimal finite path dominate indefinite continuation.
\end{proposition}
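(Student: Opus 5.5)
The proof exhibits one explicit parameter point for each lifespan class, using the closed forms already derived. The regions are taken to be open. Each example is then perturbed slightly, which makes every region nonempty with nonempty interior.

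\emph{Aligned governance: discretion versus perpetuity.} Fix $d=0$, $c>0$, $p\in(0,1)$, $z_H>0>z_L$. Condition (\ref{eq:lowstopcondition}) reads $\eta^2<-cz_L/p$. Proposition \ref{prop:continuity} gives $N\succ P$ for $\eta<\eta_P$ and $P\succ N$ for $\eta_P<\eta<\sqrt{-cz_L/p}$. So both subregions are nonempty exactly when $\eta_P^2=-2cz_L/(1+p)$ lies strictly below $-cz_L/p$. That inequality is equivalent to $2p<1+p$, i.e.\ $p<1$, which always holds.

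Two points remain.
\begin{enumerate}[label=(\roman*)]
\item For $\eta$ above $\sqrt{-cz_L/p}$, the aligned board would itself continue in the low state, so $N$ and $P$ coincide. This is harmless: we only need some region.
\item $S(T)$ must also be excluded in these regions. In the two-state specialization, $S(T)$ with $T\ge1$ covers the single future period. It therefore coincides with $P$, while $S(0)$ is not allowed. Alternatively, under alignment, Lemma \ref{lem:flexibility} applied with investment held fixed shows that $N$ weakly dominates $S(T)$.
\end{enumerate}
Point (ii) is the one step to state carefully: mapping the one-period model onto the $S(T)$ regime.

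\emph{Entrenchment: finite lifespan.} Take $d$ large enough that biased discretion continues forever, so $N$ coincides with $P=S(\infty)$. Choose a concrete profile, for instance $b_1=b>0$ and $b_t=-m<0$ for $t\ge2$, with $\eta$ small. Then
\[
\Delta_T=-\beta^T m+\frac{\eta^2}{2c}(\beta q)^T\left[2B_T+(\beta q)^T\right]\quad (T\ge1).
\]
Since $(\beta q)^T\le\beta^T$ and $B_T\le 1/(1-\beta q)$, we get $\Delta_T<0$ for all $T\ge1$ whenever $\eta^2(3-\beta q)/(2c(1-\beta q))<m$. Strict single crossing then holds, with $T^*=1$ (taking $\Delta_0$ to correspond to $b_1>0$).

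To show that $S(1)$ strictly beats indefinite continuation, compare
\[
W_{S(\infty)}^*-W_{S(1)}^*=\sum_{T\ge1}\Delta_T<0.
\]
The series converges because $\beta<1$ and $B_T$ is bounded. If a later optimum is wanted, add a positive stretch to $b_t$ and use Proposition \ref{prop:optimalT} directly.

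Proposition \ref{prop:optimalT} compares only $S(T^*)$ against biased discretion and perpetuity. Because it is stated only for the fixed-survival-bias case, the comparison with $N$ is exactly the one against indefinite continuation. Strict inequalities at the chosen points, together with continuity of the closed-form welfare expressions in parameters, deliver open neighbourhoods, so each region is nonempty.
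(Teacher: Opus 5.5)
Your proof is correct. It uses the same decomposition as the paper's: Proposition \ref{prop:continuity} for the two aligned-governance regions and Proposition \ref{prop:optimalT} for the finite-lifespan region. But yours is constructive where the paper's is only conditional, and the added checks do real work.

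\textbf{What your checks add.}
\begin{itemize}
\item The paper says that ``sufficiently high $\eta$'' favors $P$ and infers nonemptiness from $z_L<0$ and $0<\eta_P<\infty$. This overlooks that Proposition \ref{prop:continuity} requires (\ref{eq:lowstopcondition}), which fails for large $\eta$. You identify the perpetuity region as $(\eta_P,\sqrt{-cz_L/p})$, note that it is nonempty exactly because $p<1$, and observe that $N$ and $P$ coincide above it. That is what actually establishes the region.
\item The paper asserts a finite-lifespan region ``whenever'' strict single crossing and dominance hold, but never shows these hypotheses can hold together. Your profile supplies that existence: take $b_1>0$ and $b_t=-m$ for $t\geq 2$. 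Then $\eta^2(3-\beta q)<2cm(1-\beta q)$ gives $\Delta_0>0>\Delta_T$ for all $T\geq1$. The telescoping identity $W^*_{S(\infty)}-W^*_{S(1)}=\sum_{T\geq1}\Delta_T<0$ then gives dominance.
\item You also handle $S(T)$ in the aligned case, which the paper's proof never mentions.
\end{itemize}
The paper's version is shorter; yours checks that the hypotheses it invokes can actually be met.

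\textbf{Small repairs.}
\begin{itemize}
\item Drop the alternative appeal to Lemma \ref{lem:flexibility}. The lemma presumes the lifespan rule does not affect investment, which is exactly what fails here. On $(\eta_P,\sqrt{-cz_L/p})$, $S(1)=P$ strictly beats $N$, so $N$ cannot weakly dominate $S(T)$ there. The identification $S(T)=P$ in the one-period specialization is the argument you need.
\item That same identification means perpetuity is only weakly optimal in its region, tied with $S(1)$. Your closing claim of strict inequalities at every chosen point therefore holds only for the $N$ and $S(1)$ examples. Weak optimality suffices for the statement, but neither your proof nor the paper's separates $P$ from a finite lifespan in that region.
\item One caveat you inherit from Proposition \ref{prop:continuity} rather than introduce. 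Because $\eta_P^2=-2cz_L/(1+p)>-cz_L$, an open-ended foundation that invested $e=\eta/c$ would induce the aligned board to continue in the low state and attain $W_P^*$. So the strict ranking of $P$ over $N$ on that interval rests on the paper's convention that investment under $N$ is optimized against low-state stopping.
\end{itemize}
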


The three regimes therefore address different problems. Perpetuity commits to continuity in order to protect irreversible organizational investment. Open-ended discretion preserves adaptation when future information matters and governance is aligned. Spend-down commits to termination in order to prevent future controllers from preserving an organization beyond its socially valuable life. Financial returns and payout policies may affect the underlying continuation values and financing feasibility, but neither alone determines which lifespan regime is optimal.

\section{Extensions}

\subsection{Information and external learning}

The preceding institutional-capital mechanism is an investment effect. External learning generates the opposite force because information received after date 0 increases the value of preserving a stopping option.

Suppose an additional signal arrives before a future continuation decision and refines beliefs about $C_t^{soc}$ or future grant opportunities. A discretionary foundation may use the signal when deciding whether to stop. A spend-down foundation cannot continue beyond $T$, while a perpetual foundation cannot voluntarily terminate.

\begin{proposition}[External learning]
Holding current investment and governance incentives fixed, a more informative signal about future social continuation values weakly increases the value of sequential discretion relative to either fixed lifespan commitment. The gain is strict when the signal changes the socially optimal stop-versus-continue action with positive probability.
\end{proposition}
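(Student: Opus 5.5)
The argument is a Blackwell-type comparison: a more informative signal never hurts a decision maker who acts on it, and it is irrelevant to one who cannot act on it.

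\textbf{Setup.} Fix investment $e$ and governance incentives (aligned, $d=0$), so that only the continuation decision at the relevant future date varies across regimes. Let $s$ denote the signal and write the social continuation value as $C^{soc}(\omega)$, with $\omega$ the payoff-relevant state.

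\textbf{Value under each regime.} A committed regime prescribes a fixed action $a^c\in\{\text{stop},\text{continue}\}$ at that date: continue under $P$, and stop under $S(T)$ once $T$ is reached. Its value $\mathbb{E}[\ind\{a^c=\text{continue}\}\,C^{soc}]$ does not depend on $s$. Under $N$, the aligned board continues exactly when $\mathbb{E}[C^{soc}\mid s]\geq0$. The value of discretion is therefore
\[
V_N(s)=\mathbb{E}\big[\max\{\mathbb{E}[C^{soc}\mid s],0\}\big].
\]
The relative value of discretion is $V_N(s)$ minus the regime-independent constant $V_{a^c}$.

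\textbf{Monotonicity.} Let $s'$ be more informative than $s$ in the Blackwell sense, so that $s$ is a garbling of $s'$. The argument proceeds in three steps.
\begin{enumerate}
\item Because $\max\{\cdot,0\}$ is convex, the law of iterated expectations and conditional Jensen give
\[
\mathbb{E}[\max\{\mathbb{E}[C^{soc}\mid s'],0\}\mid s]\;\geq\;\max\{\mathbb{E}[C^{soc}\mid s],0\}.
\]
Under garbling, conditioning on $s$ is coarser than conditioning on $s'$ (replace $s'$ by $(s,s')$ if needed). Taking expectations then yields $V_N(s')\geq V_N(s)$.
\item Alternatively, one can argue through a feasibility embedding, as in Lemma \ref{lem:flexibility}. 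Any stopping rule measurable in $s$ is implementable after observing $s'$, by simulating the garbling. Hence the optimum under $s'$ is weakly higher.
\item Since the committed values are unchanged, the difference $V_N-V_{a^c}$ weakly increases relative to both $P$ and $S(T)$. In the multiperiod case the same step applies date by date, because the committed path is deterministic.
\end{enumerate}

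\textbf{Strictness (main obstacle).} Jensen's inequality is strict only if, with positive probability conditional on $s$, the posterior mean $\mathbb{E}[C^{soc}\mid s']$ takes values on both sides of zero. That is, the refined signal must change the optimal action with positive probability relative to the coarser information. I would show this by splitting the $s$-cells into two groups. In some cells the sign of $\mathbb{E}[C^{soc}\mid s']$ is constant; there equality holds. In the remaining cells it varies; there the convex kink at zero gives a strictly positive gap of magnitude $\mathbb{E}\big[|\mathbb{E}[C^{soc}\mid s']|\,\ind\{\text{action switches}\}\big]$. These remaining cells have positive probability under the proposition's hypothesis.

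The delicate point is interpretation. ``The signal changes the socially optimal action'' must be read as a change relative to the action optimal under the prior, or coarser, information. It is this change, not merely a difference from the committed action, that delivers strict improvement over the less informed discretionary value. I would state this reading explicitly.
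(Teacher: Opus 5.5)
Your proof is correct and follows the paper's argument. The paper's proof is exactly your steps 2--3: the discretionary board can ignore (or, in your Blackwell version, garble) the finer signal, so its value weakly rises, while the committed stop/continue path does not depend on the signal. The paper then simply asserts the strictness clause.

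Your Jensen computation, the explicit garbling, and the exact gap $\mathbb{E}\bigl[\lvert\mathbb{E}[C^{soc}\mid s']\rvert\,\ind\{\text{action switches}\}\bigr]$ make that clause precise. You are right that ``changes the action'' must be read relative to the coarser information. You are also right to impose $d=0$ explicitly: the paper's ``may ignore the signal'' step uses alignment tacitly, and with $d>0$ extra information can strictly lower the value of discretion.
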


The result distinguishes two objects often grouped under ``learning.'' Information about the external environment favors flexibility. Investment in foundation-specific knowledge may favor commitment to a long institutional horizon. A foundation that expects rapid change in social needs may therefore value an open stopping rule even while investing heavily in transferable analytical capacity; a foundation whose effectiveness depends on persistent, nontransferable field expertise may place greater value on continuity.

\subsection{Grant relationships}

Institutional lifespan may also affect grantees, but this channel is conceptually separate from the main model. It matters only to the extent that a grant contract cannot itself provide the relevant commitment.

Let $H\geq1$ denote the horizon of an enforceable grant contract with a core grantee and $L\geq H$ the payoff horizon of a relationship-specific grantee investment $x$. The investment produces productive capacity
\begin{equation}
x_t=\varphi^t x\,\ind\{t<L\},
\qquad \varphi\in(0,1],
\label{eq:partnercapital}
\end{equation}
and costs $c(x)$, with $c'>0$ and $c''>0$. For $t<H$, complementary foundation support is contractually guaranteed. For $t\geq H$, support cannot be fully specified ex ante. If the institution remains active, its future governing body chooses whether to renew the relationship. Let $a_t\in\{0,1\}$ denote renewal and let $\Delta_t(x_t,\omega_t)$ be the incremental social surplus from renewal relative to the best alternative use. An aligned board renews when
\begin{equation}
a_t=1\quad\Longleftrightarrow\quad \Delta_t(x_t,\omega_t)\geq0.
\label{eq:renewalrule}
\end{equation}
Perpetuity therefore guarantees neither a grant nor renewal. It guarantees only that the institution remains available to make the future allocation decision.

\begin{proposition}[Grant contracts and institutional lifespan]
If $H=L$ and a foundation can honor the contract through $L-1$, all payoff-relevant periods of the grantee's investment are contractually protected. Institutional perpetuity then has no additional commitment value through the grantee-investment channel. If $H<L$, institutional continuity may raise expected renewal only in states in which the relationship would be socially retained conditional on the foundation remaining active but the open-ended institution would otherwise terminate.
\end{proposition}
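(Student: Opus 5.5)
The plan is to prove the two parts of the proposition separately: first the case $H=L$, then the case $H<L$. Throughout, hold the rest of the model fixed. The only channel through which the lifespan regime can affect the grantee is the sequence of support actions the grantee expects at each date $t<L$. The grantee chooses $x$ to maximize its expected payoff given that sequence. So it suffices to show how, and whether, the regime changes the expected support path over the payoff-relevant dates $t<L$.

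\emph{Case $H=L$.} By (\ref{eq:partnercapital}), $x_t=0$ for $t\geq L$, so the payoff-relevant periods are $t=0,\dots,L-1$. For every $t<H=L$, complementary support is contractually guaranteed, and by hypothesis the foundation can honor the contract through $L-1$. The support path on all payoff-relevant dates is therefore identical under $P$, $N$ and $S(T)$ (the latter whenever feasibility through $L-1$ holds). Hence the grantee's objective, and so its optimal $x$, do not depend on the regime. Any difference in welfare between regimes then comes only through the foundation's own continuation payoffs, which are already treated in the main model. This is the flexibility logic of Lemma \ref{lem:flexibility} restricted to the grantee channel: perpetuity adds no commitment value through that channel.

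\emph{Case $H<L$.} For $t<H$ the argument above still applies. For $H\leq t<L$, support requires two things: the institution must be active, and the board must renew, i.e.\ $a_t=1$. Under an aligned board, renewal follows (\ref{eq:renewalrule}), and this rule is the same under $P$ and $N$ whenever the institution exists. Perpetuity changes only the event that the institution is active; it never changes the renewal rule. Decompose the states at date $t$ into four cells, according to whether $\Delta_t(x_t,\omega_t)\geq0$ and whether $N$ would have terminated by $t$.
\begin{itemize}
\item In cells where $N$ remains active, the support outcomes under $P$ and $N$ coincide.
\item In cells where $\Delta_t<0$, there is no renewal under either regime.
\item Only the cell where $\Delta_t\geq0$ and $N$ terminates yields support under $P$ but not under $N$.
\end{itemize}
The expected renewal difference is therefore the probability of that cell, which gives the ``only in states'' claim. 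Since renewal is weakly higher pathwise, expected support is weakly higher, so any effect on $x$ runs through that cell alone.

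The main obstacle is that $x_t$ depends on $x$, and therefore on the regime, so the cells are themselves endogenous. I would handle this by fixing $x$ first and doing the cell comparison pointwise for that given $x$. Only then would I invoke the grantee's optimization, noting that the statement claims only a possibility (``may raise'') confined to those states, not a sign for the equilibrium change in $x$.
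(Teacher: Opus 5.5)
Your proposal is correct and follows essentially the same route as the paper's proof. For $H=L$, every period with $x_t>0$ is covered by the enforceable contract, so the lifespan rule cannot change the supported horizon; for $H<L$, perpetuity acts only through the institutional-activity event, which confines any renewal gain to states with $\Delta_t\geq0$ in which $N$ would have terminated. Your explicit four-cell partition and the device of fixing $x$ before comparing regimes make rigorous what the paper leaves implicit. One small overstatement: the renewal difference is not exactly the probability of that cell, because $P$ is guaranteed to be active only while solvent. The ``only in states'' claim needs just the support of the difference, and your argument does establish that.
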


When the grantee captures only part of the social return to $x$, the post-contract renewal problem may generate underinvestment. A longer institutional horizon may mitigate this hold-up if it preserves payoff-relevant renewal states. This mechanism is deliberately secondary here. A separate provider-assignment problem asks how different capital providers' continuation rules affect recipient investment. The present paper instead uses the extension only to show why lifespan and grant duration should not be conflated.

\subsection{Risk, external funding, and capital preservation}

Risk changes the value of retaining a long institutional horizon but does not alter the commitment logic above. Let the real return become stochastic, $R_{t+1}$, so that
\begin{equation}
A_{t+1}=(A_t-f-g_t-e_t)(1+R_{t+1}).
\label{eq:riskbudget}
\end{equation}
Social productivity $\theta_t$ may also be stochastic and correlated with returns. Retained assets then finance future grants and organizational investment and may provide resources in states in which mission needs are high. The insurance value depends on the joint distribution of returns, mission opportunities, and other financing sources rather than on return variance alone.

Recent evidence also cautions against treating reserves as a purely internal hedge. Matveyev and Zeume (2026) show that nonprofit liquidity may affect donor behavior, so a larger balance sheet need not translate one-for-one into greater resources after a funding shock. In reduced form, let expected external support be $D(A_t)$, with $D'(A_t)$ potentially negative when outside donors interpret reserves as lower need. The marginal value of retaining an additional foundation dollar then includes both its financial payoff and any induced change in future external funding. This channel does not determine lifespan by itself, but it makes the insurance value of endowment wealth endogenous to the broader financing environment.

A perpetual charter should not be interpreted as a guarantee of physical survival in every financial state. Under risk, sufficiently adverse returns may make continued operation infeasible. The commitment in $P$ is therefore a commitment against voluntary termination while solvent. This interpretation avoids requiring finite initial wealth to finance positive operating costs almost surely forever.

The distinction between institutional perpetuity and capital preservation remains important. Consider an additional asset-floor rule
\begin{equation}
A_t\geq \underline A_t
\qquad \text{for all }t,
\label{eq:floor}
\end{equation}
where $\underline A_t$ is externally specified. Preserving the real value of the initial corpus is one possible specification. The perpetual charter itself imposes no such floor.

\begin{proposition}[Capital preservation]
Holding institutional-investment and governance effects fixed, adding a binding asset-floor constraint to a perpetual foundation weakly lowers maximum social welfare. The loss is strict when the unconstrained perpetual policy would use assets below the floor in a state in which their marginal social deployment value is positive.
\end{proposition}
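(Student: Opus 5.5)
The argument is a feasible-set inclusion followed by a strict-improvement construction. I would hold the institutional-investment path $\{e_t\}$ and the governance rule fixed, as the statement allows. Then define the perpetual foundation's welfare maximization as a choice of a state-contingent policy $\{g_t,\text{operation}\}$ subject to the budget dynamics (\ref{eq:riskbudget}), solvency, and the requirement of no voluntary termination while solvent. The floor-constrained problem is the same program with the additional constraint (\ref{eq:floor}) added state by state. Every policy feasible under the floor is therefore feasible without it, and the supremum of expected welfare over a subset cannot exceed the supremum over the superset. This gives the weak inequality directly. I would note that it holds whether or not maxima are attained, since the comparison is between suprema.

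For strictness, I would take an optimal unconstrained perpetual policy $\pi^u$. By hypothesis there is a date $t$ and an event $E$ of positive probability on which $\pi^u$ drives $A_t$ below $\underline A_t$. On $E$, the marginal social deployment value of assets is positive. Let $\pi^f$ be any optimal, or near-optimal, floor-feasible policy.

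The key step is to show that $\pi^f$ is strictly suboptimal in the unconstrained problem. On $E$, $\pi^f$ must hold assets at or above $\underline A_t$, where $\pi^u$ deploys them. I would construct a perturbation of $\pi^f$ that releases a small amount $\varepsilon$ of assets on $E$ into grants $g_t$ and adjusts later spending only through the budget recursion. This perturbation is feasible in the unconstrained problem for small $\varepsilon$, because solvency is slack wherever the floor binds above zero. By the positive-marginal-value hypothesis, it raises expected welfare by a first-order amount $\varepsilon\cdot\Pr(E)\cdot(\text{marginal value})>0$. Hence the unconstrained supremum strictly exceeds $W(\pi^f)$.

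The main obstacle is making ``marginal social deployment value is positive'' precise enough to drive that perturbation. The first-order gain from grants on $E$ must not be offset by reduced future resources, since lower $A_{t+1}$ lowers later grants, possibly interacts with the donor channel $D(A_t)$, and could raise insolvency risk. I would read the hypothesis as positivity of the net marginal value: the derivative of continuation welfare with respect to deploying a unit at $t$ on $E$, inclusive of these dynamic effects. Under that reading the perturbation argument goes through. I would state this interpretation explicitly in the proof, and I would also check that the floor binds with slack relative to solvency, so that the perturbed policy does not trigger forced termination and thereby alter the institutional-investment effects held fixed.
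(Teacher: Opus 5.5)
Your proposal is correct and follows the paper's own argument. The floor shrinks the feasible set, which gives the weak inequality, and strictness comes from the positive marginal value of the resources the floor locks up on a positive-probability event; your perturbation at the constrained optimum, with the hypothesis read as a positive net shadow value of relaxing the floor, makes explicit what the paper's one-line strictness claim asserts, though since the statement presupposes attained maxima you should drop the ``near-optimal'' variant, where a first-order gain that is not uniform across policies would not by itself separate the two suprema.
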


This proposition does not imply that an asset floor is always undesirable. A floor may itself address a governance problem or protect promised long-run activity. The narrower point is that the economic case for institutional perpetuity and the economic case for preserving the initial corpus are separate.

With $f>0$, even a perpetual foundation without a preservation mandate needs enough financial capacity to remain operational. In a deterministic zero-grant, zero-investment stationary path, the minimum stationary asset level satisfies
\begin{equation}
\bar A=(\bar A-f)(1+r),
\qquad
\bar A=\frac{1+r}{r}f.
\label{eq:continuationcapacity}
\end{equation}
Thus perpetuity permits use of part of the initial corpus whenever $A_0>\bar A$, but it cannot consume the continuation capacity required by its own institutional commitment. Operational viability remains distinct from preservation of $A_0$.

\section{Policy and institutional design}

The model has three implications for institutional design. First, payout, lifespan, and capital preservation are separate margins. A payout rule constrains current deployment relative to assets. A terminal rule constrains how long the institution may exist. A preservation rule constrains the asset path. Bundling these choices obscures the mechanism that each rule is meant to address.

Second, an open-ended legal form has a natural role as a flexibility benchmark. Recent practitioner proposals that reject a fixed choice between perpetuity and spend-down are consistent with this logic. Yet discretion is not generally first best once current institutional investment or future governance incentives respond to the lifespan rule. A credible perpetual horizon may support investment in organizational capacity, while a terminal date may discipline later survival incentives.

Third, a uniform payout rule cannot generally implement the socially optimal policy. In the United States, private nonoperating foundations are generally required to make qualifying distributions based on a distributable amount tied to a minimum investment return, which is generally five percent of relevant noncharitable-use assets subject to statutory adjustments (Internal Revenue Service, 2026). Such a rule may limit indefinite accumulation, but the optimal deployment path in the model varies with financial returns, current social opportunities, institutional-capital investment, future information, and the financing response of other donors.

A practical implication is that a foundation charter should distinguish at least four decisions: the institutional lifespan, the payout policy, the degree of corpus preservation, and the horizon of individual grant commitments. A perpetual institution may rationally adopt a high payout when current opportunities are unusually productive. A finite-life institution may hold substantial reserves early in its life. An open-ended institution may invest in transferable capacity while retaining a state-contingent stopping rule. The institutional form and the spending rule need not move together.

\section{Conclusion}

Organizational lifespan is a commitment margin. Current decision makers may preserve future discretion over survival, commit the organization to continue, or commit it to terminate. These choices trade off adaptation, investment incentives, and future governance. Philanthropic foundations make the trade-off explicit because their governing documents may specify perpetuity or a finite terminal date.

With aligned governance and no behavioral response, open-ended discretion is the benchmark: future decision makers may condition stopping on information unavailable at formation. Perpetuity becomes valuable when the organization accumulates expertise, reputation, networks, screening capacity, or institutional memory whose return depends on continued operation. A credible long horizon then raises investment in organization-specific capital. Its cost is the loss of the option to terminate in states in which continuation has negative social value.

A finite lifespan addresses the opposite commitment problem. Future governing bodies may obtain private benefits from institutional survival and continue after the organization's social value has declined. A terminal date limits that entrenchment but sacrifices valuable continuation states and shortens the horizon over which organizational investment pays. The optimal finite date therefore balances governance discipline against both the direct and investment value of another period.

The resulting three-regime characterization gives each institutional form a distinct economic role. Perpetuity protects investment in durable organizational capacity. A terminal commitment disciplines excessive future survival. Open-ended discretion preserves state-contingent adaptation. None of these roles is equivalent to a payout rate, the duration of a grant contract, or preservation of the initial corpus.

The foundation application also illustrates a broader organizational-design problem. Rules governing future actions matter only while the organization exists. A lifespan commitment determines whether future decision makers retain an organization within which those rules may be exercised at all. The resulting commitment problem has two directions: future stopping discretion may be efficient ex post yet weaken irreversible organizational investment ex ante, while future continuation discretion may preserve an institution after its social value has fallen below zero. Whenever organizational capital is partly nontransferable or future controllers value survival differently from the original principal, the choice between continuity, termination, and discretion may therefore be economically consequential.

\section*{Declaration of generative AI use}
During the preparation of this work the author used ChatGPT (OpenAI) to review theoretical
derivations, refine manuscript structure, and improve readability. The author independently
verified the derivations and reviewed and edited the AI-assisted output as needed, and takes full
responsibility for the content of the published article.

\appendix
\section{Proofs}

\begin{proof}[Proof of Lemma \ref{lem:flexibility}]
With aligned governance and no behavioral response to the lifespan rule, open-ended discretion may reproduce any feasible fixed stopping decision and may also continue whenever continuation is socially optimal. Fixed lifespan rules therefore restrict the state-contingent action set. The weak dominance follows, with strictness whenever a commitment requires an action different from the socially optimal discretionary action with positive probability.
\end{proof}

\begin{proof}[Proof of Proposition \ref{prop:continuity}]
Maximizing (\ref{eq:WNcanonical}) gives $e_N^*=p\eta/c$ and (\ref{eq:WNsolution}). Maximizing (\ref{eq:WPcanonical}) gives $e_P^*=\eta/c$ and (\ref{eq:WPsolution}). Their difference is (\ref{eq:perpdiscgap}), which is positive exactly when $\eta>\eta_P$ and negative when $\eta<\eta_P$.
\end{proof}

\begin{proof}[Proof of Proposition \ref{prop:optimalT}]
From (\ref{eq:deltaT}), the investment term is positive but converges to zero at rate $(\beta q)^T$, while $B_T$ is bounded above by $(1-\beta q)^{-1}$. Because $\limsup_{t\to\infty} b_t<0$, there exist $\varepsilon>0$ and $\bar T$ such that $b_{T+1}\leq-\varepsilon$ for all $T\geq\bar T$. For such $T$, the direct term is at most $-\varepsilon\beta^T$. Dividing the positive investment term by $\beta^T$ leaves terms proportional to $q^T$ and $\beta^Tq^{2T}$, both of which converge to zero. Hence $\Delta_T<0$ for all sufficiently large $T$. Since $b_1>0$, extending the horizon from zero to one period has positive direct value and a positive investment effect, so a positive horizon is preferred to immediate termination. Single crossing then gives the unique $T^*$ in (\ref{eq:Tstarcondition}). Under the final condition in the proposition, the optimized finite path has strictly higher welfare than the indefinite path implemented by biased discretion and perpetuity.
\end{proof}

\begin{proof}[Proof of Proposition \ref{prop:three}]
The aligned-governance regions follow directly from Proposition \ref{prop:continuity}: sufficiently low $\eta$ favors the stopping option under $N$, while sufficiently high $\eta$ favors the investment commitment under $P$. These regions are nonempty because $z_L<0$ and the threshold $\eta_P$ is finite and positive. Proposition \ref{prop:optimalT} supplies a nonempty finite-lifespan region whenever survival bias induces indefinite continuation and the strict single crossing makes the optimized finite path dominate it. The three cases correspond to distinct restrictions on future continuation and therefore establish the three lifespan classes stated in the proposition.
\end{proof}

\begin{proof}[Proof of the external-learning proposition]
A more informative signal expands the information available to a discretionary stopping decision. The decision maker may ignore the signal, so its value is weakly nonnegative. Fixed lifespan commitments cannot use the signal to reverse the committed stop-versus-continue action. The gain is strict when the signal changes the socially optimal action with positive probability.
\end{proof}

\begin{proof}[Proof of the grant-contract proposition]
When $H=L$, every period in which $x_t$ is positive is covered by the enforceable grant contract, so institutional lifespan does not change the supported payoff horizon provided the contract can be honored. When $H<L$, post-contract support requires both institutional activity and $\Delta_t\geq0$. Perpetuity affects expected renewal only through the institutional-activity event. Hence it adds commitment value only in states in which the relationship would be renewed conditional on activity but the open-ended institution would otherwise terminate.
\end{proof}

\begin{proof}[Proof of the capital-preservation proposition]
The asset floor in (\ref{eq:floor}) adds constraints to the perpetual foundation's allocation problem. Removing the floor weakly expands the feasible set, so the constrained optimum cannot exceed the unconstrained optimum. The loss is strict when the unconstrained allocation violates the floor and the resources released by doing so have strictly positive marginal social value.
\end{proof}

\section*{References}
\addcontentsline{toc}{section}{References}
\sloppy

Afik, Z., and H. Katz. 2019. ``Reconsidering the Philanthropic Foundation Minimum Payout Policy under a `New Normal'.'' \textit{Journal of Policy Modeling} 41(2): 219--233. \url{https://doi.org/10.1016/j.jpolmod.2018.09.004}.

Afik, Z., A. Levy, and H. Katz. 2018. ``Philanthropic Foundations Payout and Multiyear Grants: Between Giving Today and Giving Tomorrow.'' \textit{Journal of Wealth Management} 20(4): 33--45. \url{https://doi.org/10.3905/jwm.2018.20.4.033}.

Grossman, S. J., and O. D. Hart. 1986. ``The Costs and Benefits of Ownership: A Theory of Vertical and Lateral Integration.'' \textit{Journal of Political Economy} 94(4): 691--719. \url{https://doi.org/10.1086/261404}.

Hart, O., and J. Moore. 1990. ``Property Rights and the Nature of the Firm.'' \textit{Journal of Political Economy} 98(6): 1119--1158. \url{https://doi.org/10.1086/261729}.

Hansmann, H. 1990. ``Why Do Universities Have Endowments?'' \textit{Journal of Legal Studies} 19(1): 3--42. \url{https://doi.org/10.1086/467841}.

Hansmann, H., and S. Thomsen. 2021. ``The Governance of Foundation-Owned Firms.'' \textit{Journal of Legal Analysis} 13(1): 172--230. \url{https://doi.org/10.1093/jla/laaa005}.

Brown, J. R., S. G. Dimmock, J.-K. Kang, and S. J. Weisbenner. 2014. ``How University Endowments Respond to Financial Market Shocks: Evidence and Implications.'' \textit{American Economic Review} 104(3): 931--962. \url{https://doi.org/10.1257/aer.104.3.931}.

Loh, C., and E. Buteau. 2017. \textit{A Date Certain: Lessons from Limited Life Foundations}. Cambridge, MA: Center for Effective Philanthropy.

Crook, M. 2023. ``Mission Impossible: Foundation Investment Policy in the Post-COVID World.'' \textit{Journal of Investing} 32(2): 97--107. \url{https://doi.org/10.3905/joi.2022.1.249}.

Deep, A., and P. Frumkin. 2006. ``The Foundation Payout Puzzle.'' In \textit{Taking Philanthropy Seriously: Beyond Noble Intentions to Responsible Giving}, edited by W. Damon and S. Verducci, 189--204. Bloomington: Indiana University Press. Earlier version: Hauser Center Working Paper, 2001.

Dietz, N., B. McKeever, E. Steele, and C. E. Steuerle. 2015. \textit{Foundation Grantmaking over the Economic Cycle}. Washington, DC: Urban Institute.

Ghadamian, R. 2026. ``The Perpetuity Trap: Why Foundation Endowments Misallocate Capital.'' IRSA Working Paper. \url{https://doi.org/10.2139/ssrn.7006300}.

Hengevoss, A., and G. von Schnurbein. 2025. ``All Good Things Come to an End: Emerging Strategic Patterns of Sunset Foundations.'' \textit{Journal of Philanthropy} 30(1): e70015. \url{https://doi.org/10.1002/nvsm.70015}.

Klausner, M. D. 2003. ``When Time Isn't Money: Foundation Payouts and the Time Value of Money.'' Stanford Law School Working Paper. \url{https://doi.org/10.2139/ssrn.445982}.

Internal Revenue Service. 2026. ``Minimum Investment Return.'' Accessed September 2026. \url{https://www.irs.gov/charities-non-profits/private-foundations/minimum-investment-return}.

Ochoa-Carreno, L. 2026. ``Perpetuity and Spend-Down: The Role of Investment and Technological Transitions in Philanthropy.'' \textit{New Political Economy}. Published online March 17, 2026. \url{https://doi.org/10.1080/13563467.2026.2642670}.

Roth Tran, B. 2019. ``Divest, Disregard, or Double Down? Philanthropic Endowment Investments in Objectionable Firms.'' \textit{American Economic Review: Insights} 1(2): 241--256. \url{https://doi.org/10.1257/aeri.20180347}.

Thelin, J. R., and R. W. Trollinger. 2009. \textit{Time Is of the Essence: Foundations and the Policies of Limited Life and Endowment Spend-Down}. Washington, DC: Aspen Institute.

Jaag, C. 2026a. ``Grant or Loan? Allocating Impact Capital under Scarcity.'' Unpublished manuscript.

Jaag, C. 2026b. ``Financing Social Value: Private, Philanthropic, and State Capital.'' Unpublished manuscript.

Jaag, C. 2026c. ``When Future Intervention Weakens Mission: Organizational Commitment under Future Contracting.'' Unpublished manuscript. \url{https://doi.org/10.13140/RG.2.2.14861.50407/2}.

Lo, A. W., E. V. Matveyev, and S. Zeume. 2025. ``The Risk, Reward, and Asset Allocation of Nonprofit Endowment Funds.'' NBER Working Paper 34078, revised February 2026. \url{https://doi.org/10.3386/w34078}.

Mananzala, R., and C. Rojas. 2026. ``The Tidal Foundation: Unmooring Philanthropy From Perpetuity Versus Spend Down.'' Center for Effective Philanthropy, September 10, 2026.

Matveyev, E., and S. Zeume. 2026. ``Financing Nonprofits.'' Working paper, revised July 1, 2026. \url{https://doi.org/10.2139/ssrn.6145866}.

\end{document}